\documentclass[journal]{new-aiaa} 
\usepackage[utf8]{inputenc}
\usepackage{textcomp}

\usepackage{graphicx}
\usepackage{amsmath}
\usepackage[version=4]{mhchem}
\usepackage{siunitx}

\usepackage{bm}
\usepackage{amsmath,amsfonts}
\usepackage{xcolor}

\DeclareMathOperator{\R}{\mathbb{R}}

\newtheorem{thm}{Theorem}[section]
\newtheorem{lem}[thm]{Lemma}

\newtheorem{defn}[thm]{Definition}
\newtheorem{rem}[thm]{Remark}
\newtheorem{assum}[thm]{Assumption}
\newenvironment{proof}{\paragraph{Proof:}}{\hfill$\square$}
\usepackage{lipsum}

\usepackage{algorithm}
\usepackage{algorithmic}
\floatname{algorithm}{Procedure}

\title{Consensus Control of Linear Multi-Agent Systems with Non-uniform Time-varying Communication Delays}
\author{Rajnish Bhusal \footnote{Ph.D. Student, Department of Mechanical and Aerospace Engineering,  \texttt{rajnish.bhusal@mavs.uta.edu}} and Kamesh Subbarao \footnote{Professor, Department of Mechanical and Aerospace Engineering,  \texttt{subbarao@uta.edu}}}
\affil{The University of Texas at Arlington, Arlington, TX, 76019, USA }

\begin{document}
\maketitle


\begin{abstract}
This paper is concerned with the consensus problem for multi-agent systems subject to communication delays between the neighboring agents. We consider a scenario where each agent is characterized by a general high-order linear system and the communication delays between the agents are non-uniform and time-varying. We design a distributed control protocol for the agents and provide an equivalent stability problem to be solved that guarantees the state consensus in the group of agents. Moreover, a delay-dependent stability criterion is provided by combining the Lyapunov–Krasovskii method with the linear matrix inequality approach.
\end{abstract}

\section{Introduction}
\label{sec:intro}
Consensus problems in multi-agent systems have received great attention in recent years because of their wide applications in the areas including vehicle formations \cite{fax2004information}, distributed computation and optimization \cite{yang2016multi}, flocking \cite{jafari2019biologically}, and distributed smart grids \cite{ma2013smart}. The objective of the consensus problem is to guarantee agreement on a common state or output trajectory while designing control protocols for each agent. 

In practice, agents share information through wireless communication networks.  Due to technological constraints (communication congestion and finite transmission bandwidth), time-delays in the information transmission and processing between agents are unavoidable \cite{hou2017consensus}. Further, the presence of time delay may result in degraded closed-loop performance, and may even cause the loss of stability. Thus, the effects of time delays on the closed-loop network of multi-agent system need to be investigated. In the literature, two types of time delays are considered in the scenario of multi-agent systems namely, input delay and communication delay \cite{cao2013overview}. Input delay relates to the processing and connecting time for the packets arriving at each agent while communication delay refers to the time for transferring information between agents.  

In the literature, the consensus problem in first-order, second-order and general linear multi-agent systems with communication delays and/or input delays has been solved using frequency-domain and time-domain (Lyapunov)-based approaches. For first-order multi-agent systems under undirected graph topology, Ref. \cite{ma2020delay} derives the analytical expression for the input delay margin by solving a univariate convex optimization problem. Similarly, the delay margins for a general second-order and double integrator multi-agent systems with communication delay under directed graph topology are deduced in Ref. \cite{hou2017consensus} using frequency-domain analysis. However, all of these works assume the time delay in the multi-agent systems to be uniform (or identical) for all the agents, which is generally too-restrictive for the real applications. Departing from these works, Ref. \cite{li2018consensus} derives the consensus conditions for a general high-order linear multi-agent systems under undirected graph and non-uniform communication delays. Similarly, the works carried out in \cite{zhang2015synchronization} characterizes the non-uniform input delay margin for a discrete-time high-order linear multi-agent systems under undirected graph topology. 

In the real-time communication processes, time delays are not constant and usually change over time; however, most of the works carried out in the literature of the multi-agent systems, including the aforementioned works, do not take the time-varying nature of delays into consideration. In \cite{wang2020consensus}, a consensus condition for a general high-order discrete-time linear multi-agent systems under undirected graph with time-varying delays is derived, however the delays are assumed to be identical (uniform) for all the agents in the network. For linear multi-agent systems with time-varying non-uniform communication delays under directed graph topology, Ref. \cite{petrillo2017adaptive} provides a delay-dependent stability criterion by combining the Lyapunov-Krasovskii method with the linear matrix inequality (LMI) approach. However, the works carried out in \cite{petrillo2017adaptive} assumes the system matrices to be in a controllable-canonical form. 

To our concern, most of the works carried out in the literature for linear multi-agent systems assume the time-delay to be uniform, and/or constant. On the other hand, the assumptions on the dynamics of the system and underlying graph topology are restrictive. To that end, this paper provides following contribution to the literature. We consider a general linear high-order continuous-time multi-agent systems with non-uniform time-varying communication delays among the agents. The consensus problem in multi-agent systems is converted to an equivalent problem of stabilizing global dynamics of the consensus error. In order to derive the stability conditions for the dynamics of the consensus error, a suitable Lyapunov Krasovskii functional is chosen. Finally, we utilize integral inequalities provided in \cite{han2005absolute} and \cite{peng2008improved}, and invoke Lyapunov Krasovskii stability theorem to derive the delay-dependent stability conditions in terms of linear matrix inequalities.

The paper is organized as follows. Section \ref{sec:primer} describes the preliminaries and formulates the problem in consideration. Section \ref{sec:consensus_protocol} discusses the control protocol for multi-agent systems with non-uniform time-varying communication delays. The main results concerning the delay-margin characterization for the multi-agent system are presented in section \ref{sec:main_results}. Numerical examples are presented in section \ref{sec:results} and section \ref{sec:conclusion} provides the concluding remarks.

\section{Preliminaries and Problem Formulation}
\label{sec:primer}
\subsection{Notations}
For a vector $\mathbf{x} \in \R^n$, $\|\mathbf{x}\|$ denotes its 2-norm. In the paper, $\mathbf{A} \otimes \mathbf{B}$ denotes the Kronecker product of matrices $\mathbf{A}$ and $\mathbf{B}$, $\mathbf{1}_n$ denotes a $n$-dimensional vector of ones; $\mathbf{0}_n$ denotes a $n$-dimensional vector of zeros; $\mathbf{I}_n$ denotes the identity matrix of dimension $n \times n$. Denote $\text{col}(\mathbf{x}_1, \mathbf{x}_2, \dots, \mathbf{x}_n)$ as concatenation of vectors $\mathbf{x}_1, \mathbf{x}_2, \dots, \mathbf{x}_n$ such that $\text{col}(\mathbf{x}_1, \mathbf{x}_2, \dots, \mathbf{x}_n) = [\mathbf{x}^\text{T}_1, \mathbf{x}^\text{T}_2, \dots, \mathbf{x}^\text{T}_n ]^\text{T}$.

\subsection{Preliminaries}
The interconnection among a group of $N$ agents are encoded through communication graphs $\mathcal{G} = (\mathcal{V},\mathcal{E})$ where $\mathcal{V} = \left\lbrace 1, 2, \dots, N \right\rbrace$ is a nonempty node set and $\mathcal{E} \subseteq \mathcal{V} \times \mathcal{V}$ is an edge set of ordered pairs of nodes, called edges. Each of the edges of a graph $(i,j)$ is associated with a non-negative weight $a_{ij}$.  Node $j$ is the neighbor of $i$ if  $(j, i) \in \mathcal{E}$ and the set of neighbors of node $i$ can be represented as $\mathcal{N}_i$. The graph $\mathcal{G}$ is said to be strongly connected if $i, j$ are connected for all distinct nodes $i, j \in \mathcal{V}$.  A graph is said to have a directed spanning tree if there exists a node called the root node, which has no parent node and has directed paths to all other nodes in the graph.

The adjacency matrix $\mathcal{A} = [a_{ij}] \in \R^{N \times N}$ of a directed graph is defined such that  $a_{ij} = 1$ if $(j, i$) $\in$ $\mathcal{E}$ and $a_{ij} = 0$, otherwise. The in-degree of node $v_i$ is defined as $d_i = \sum_{j=1}^n a_{ij}$. The diagonal matrix obtained from $d_i$ as diagonal entries is called diagonal in-degree matrix ($\mathcal{D}$). Finally, the graph Laplacian matrix is obtained as $\mathbf{L} = \mathcal{D} -\mathcal{A} \in \R^{N \times N}$. 

\begin{assum} \label{connected_root_node}
Throughout the paper, the graph is assumed to be strongly connected with atleast one directed spanning tree.
\end{assum}

\begin{rem}
Consider a strongly connected graph with a spanning tree comprising of $N$ nodes. Let $\lambda_i$, $i=1,2, \dots, N$ be the eigenvalues of the Laplacian matrix. Then, $\lambda_1=0$ is always a simple and the smallest eigenvalue of the Laplacian matrix, and $\lambda_j>0$, for all $j=2, \dots, N$ \cite{ren2007information}.
\end{rem}

\begin{lem} \cite{han2005absolute}
\label{lem_S}
For any constant matrix $\mathbb{X} \in \mathbb{R}^{n \times n}$, $\mathbb{X}=\mathbb{X}^{\text{T}}>0$,  a scalar $\gamma>0$, and a vector function $\dot{\mathbf{z}}:[-\gamma, 0] \to \R^{n}$, following integral inequality holds

\begin{equation}
-\gamma \int_{t-\gamma}^{t} \dot{\mathbf{z}}(t+\theta)^{\text{T}} \mathbb{X} \dot{\mathbf{z}}(t + \theta) d \theta \ \leq \ \begin{bmatrix}
\mathbf{z}(t) \\
\mathbf{z}(t-\gamma)
\end{bmatrix}^{\text{T}} \begin{bmatrix}
-\mathbb{X} & \mathbb{X} \\
* & -\mathbb{X}
\end{bmatrix}\begin{bmatrix}
\mathbf{z}(t) \\
\mathbf{z}(t-\gamma)
\end{bmatrix}
\end{equation}
\end{lem}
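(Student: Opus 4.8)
The plan is to recognize this as the classical Jensen-type integral inequality for a positive-definite quadratic weight, and to reduce it to two ingredients: a convexity bound on the integral and the fundamental theorem of calculus. After the change of variable $s = t + \theta$ (so that the integration runs over $s \in [t-\gamma, t]$ and the orientation is fixed), the statement to prove is equivalent to
\begin{equation}
\gamma \int_{t-\gamma}^{t} \dot{\mathbf{z}}(s)^{\text{T}} \mathbb{X}\, \dot{\mathbf{z}}(s)\, ds \ \geq \ \left( \int_{t-\gamma}^{t} \dot{\mathbf{z}}(s)\, ds \right)^{\text{T}} \mathbb{X} \left( \int_{t-\gamma}^{t} \dot{\mathbf{z}}(s)\, ds \right).
\end{equation}

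For this central inequality I would argue by completing the square rather than quoting Cauchy--Schwarz directly. Since $\mathbb{X} = \mathbb{X}^{\text{T}} > 0$, for every constant $\mathbf{v} \in \R^{n}$ and every $s$ we have $(\dot{\mathbf{z}}(s) - \mathbf{v})^{\text{T}} \mathbb{X} (\dot{\mathbf{z}}(s) - \mathbf{v}) \geq 0$. Integrating over $[t-\gamma, t]$ and expanding gives a quadratic in $\mathbf{v}$; choosing $\mathbf{v} = \frac{1}{\gamma} \int_{t-\gamma}^{t} \dot{\mathbf{z}}(s)\, ds$, the time-average of $\dot{\mathbf{z}}$, which minimizes that quadratic, and multiplying through by $\gamma$ produces the displayed bound. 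An equivalent route is to note that the $(1+n) \times (1+n)$ block matrix $\bigl[\begin{smallmatrix} \dot{\mathbf{z}}(s)^{\text{T}} \mathbb{X} \dot{\mathbf{z}}(s) & \dot{\mathbf{z}}(s)^{\text{T}} \mathbb{X} \\ \mathbb{X} \dot{\mathbf{z}}(s) & \mathbb{X} \end{smallmatrix}\bigr]$ is positive semidefinite by the Schur complement (its Schur complement relative to $\mathbb{X}$ vanishes), integrate it entrywise, and apply the Schur complement once more to the integrated block; both routes are routine.

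The remaining steps are bookkeeping. By the fundamental theorem of calculus $\int_{t-\gamma}^{t} \dot{\mathbf{z}}(s)\, ds = \mathbf{z}(t) - \mathbf{z}(t-\gamma)$, so negating the central inequality yields
\begin{equation}
-\gamma \int_{t-\gamma}^{t} \dot{\mathbf{z}}(s)^{\text{T}} \mathbb{X}\, \dot{\mathbf{z}}(s)\, ds \ \leq \ -\bigl(\mathbf{z}(t) - \mathbf{z}(t-\gamma)\bigr)^{\text{T}} \mathbb{X} \bigl(\mathbf{z}(t) - \mathbf{z}(t-\gamma)\bigr).
\end{equation}
Expanding the right-hand side into the terms $-\mathbf{z}(t)^{\text{T}} \mathbb{X} \mathbf{z}(t)$, $+\mathbf{z}(t)^{\text{T}} \mathbb{X} \mathbf{z}(t-\gamma)$, its symmetric transpose, and $-\mathbf{z}(t-\gamma)^{\text{T}} \mathbb{X} \mathbf{z}(t-\gamma)$, and collecting them as a quadratic form in the stacked vector $\text{col}(\mathbf{z}(t), \mathbf{z}(t-\gamma))$, reproduces exactly the $2 \times 2$ block matrix with diagonal blocks $-\mathbb{X}$ and off-diagonal blocks $\mathbb{X}$ claimed in the statement.

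I do not expect a serious obstacle, since the result is standard; the only points needing mild care are the orientation and change of variable noted above (so that the average vector $\mathbf{v}$ and the sign of $\mathbf{z}(t) - \mathbf{z}(t-\gamma)$ remain consistent) and the implicit regularity assumption that $\dot{\mathbf{z}}$ is square-integrable on $[t-\gamma, t]$ so that every integral appearing is well defined.
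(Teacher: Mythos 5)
Your proof is correct, but there is no proof in the paper to compare it against: Lemma \ref{lem_S} is stated as a known result imported verbatim from \cite{han2005absolute} and is invoked as a black box in the proof of Theorem \ref{thm_main} (with $\mathbb{X}=\mathbf{I}\otimes\mathbf{S}_{kj}$). What you have written is the standard, self-contained derivation of this Jensen-type inequality: the core bound $\gamma\int_{t-\gamma}^{t}\dot{\mathbf{z}}^{\text{T}}(s)\,\mathbb{X}\,\dot{\mathbf{z}}(s)\,ds \geq \bigl(\int_{t-\gamma}^{t}\dot{\mathbf{z}}(s)\,ds\bigr)^{\text{T}}\mathbb{X}\bigl(\int_{t-\gamma}^{t}\dot{\mathbf{z}}(s)\,ds\bigr)$ obtained by completing the square in a constant vector $\mathbf{v}$ (your Schur-complement variant is equally valid and is essentially the argument used in the cited literature), followed by the fundamental theorem of calculus and expansion of $-\bigl(\mathbf{z}(t)-\mathbf{z}(t-\gamma)\bigr)^{\text{T}}\mathbb{X}\bigl(\mathbf{z}(t)-\mathbf{z}(t-\gamma)\bigr)$ into the stated $2\times 2$ block quadratic form; the minimizing choice of $\mathbf{v}$ and the final expansion both check out. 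Beyond filling in what the citation leaves implicit, your writeup adds two worthwhile points of care: it makes explicit the regularity actually required ($\dot{\mathbf{z}}$ square-integrable, so every integral exists), and it repairs the notational slip in the statement as printed (integration limits $t-\gamma$ to $t$ are inconsistent with the integrand $\dot{\mathbf{z}}(t+\theta)$ and the domain $[-\gamma,0]$; your reading $s=t+\theta$ with $\theta\in[-\gamma,0]$ is the intended one). A further benefit: because your argument is pointwise in $t$ and valid on an arbitrary interval whose length multiplies the integral, it also justifies the form in which the paper actually uses the lemma in \eqref{eqn_lem_S}, where the interval is $[t-\tau_{k}(t),\,t-\tau_{j}(t)]$ and its length $\tau_{k}(t)-\tau_{j}(t)$ is time-varying rather than the constant $\gamma$ of the statement --- a mismatch that the paper's citation-only treatment glosses over.
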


\begin{lem} \cite{peng2008improved}
\label{lem_R}
For any constant matrix $\mathbb{Y} \in \R^{n \times n}$, $\mathbb{Y} = \mathbb{Y}^{\text{T}}>0$, scalars $h_{1} \leq \tau(t) \leq h_{2}$ and a vector function $\dot{\mathbf{z}}:\left[-h_{2},-h_{1}\right] \to \R^{n}$, following integral inequality holds
\begin{equation}
-\left(h_{2}-h_{1}\right) \int_{t-h_{2}}^{t-h_{1}} \dot{\mathbf{z}}^{\text{T}}(s) \ \mathbb{Y} \ \dot{\mathbf{z}}(s) \ d s \leq \begin{bmatrix}
\mathbf{z}\left(t-h_{1}\right) \\
\mathbf{z}(t-\tau(t)) \\
\mathbf{z}\left(t-h_{2}\right)
\end{bmatrix}^{\text{T}} \begin{bmatrix}
-\mathbb{Y} & \mathbb{Y} & \mathbf{0} \\
* & -2 \mathbb{Y} & \mathbb{Y} \\
* & * & \mathbb{Y}
\end{bmatrix} \begin{bmatrix}
\mathbf{z}\left(t-h_{1}\right) \\
\mathbf{z}(t-\tau(t)) \\
\mathbf{z}\left(t-h_{2}\right)
\end{bmatrix}
\end{equation}
\end{lem}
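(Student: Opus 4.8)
The plan is to split the interval of integration at the intermediate point $t-\tau(t)$, apply the Jensen-type integral inequality underlying Lemma \ref{lem_S} to each of the two pieces, and then reassemble the two resulting quadratic forms into the single symmetric block matrix appearing on the right-hand side. Because $h_{1}\le\tau(t)\le h_{2}$ implies $t-h_{2}\le t-\tau(t)\le t-h_{1}$, I may write
\begin{equation}
\int_{t-h_2}^{t-h_1} \dot{\mathbf{z}}^{\text{T}}(s)\,\mathbb{Y}\,\dot{\mathbf{z}}(s)\,ds = \int_{t-h_2}^{t-\tau(t)} \dot{\mathbf{z}}^{\text{T}}(s)\,\mathbb{Y}\,\dot{\mathbf{z}}(s)\,ds + \int_{t-\tau(t)}^{t-h_1} \dot{\mathbf{z}}^{\text{T}}(s)\,\mathbb{Y}\,\dot{\mathbf{z}}(s)\,ds,
\end{equation}
where the two sub-intervals have nonnegative lengths $h_{2}-\tau(t)$ and $\tau(t)-h_{1}$.

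Next I would apply the basic Jensen inequality (the content of Lemma \ref{lem_S}, which is translation-invariant and hence valid on each shifted sub-interval) to each piece, using $\int_{\alpha}^{\beta}\dot{\mathbf{z}}\,ds=\mathbf{z}(\beta)-\mathbf{z}(\alpha)$, to obtain
\begin{align}
-(h_2-\tau(t)) \int_{t-h_2}^{t-\tau(t)} \dot{\mathbf{z}}^{\text{T}} \mathbb{Y} \dot{\mathbf{z}}\,ds &\leq -\big(\mathbf{z}(t-\tau(t)) - \mathbf{z}(t-h_2)\big)^{\text{T}} \mathbb{Y} \big(\mathbf{z}(t-\tau(t)) - \mathbf{z}(t-h_2)\big), \\
-(\tau(t)-h_1) \int_{t-\tau(t)}^{t-h_1} \dot{\mathbf{z}}^{\text{T}} \mathbb{Y} \dot{\mathbf{z}}\,ds &\leq -\big(\mathbf{z}(t-h_1) - \mathbf{z}(t-\tau(t))\big)^{\text{T}} \mathbb{Y} \big(\mathbf{z}(t-h_1) - \mathbf{z}(t-\tau(t))\big).
\end{align}

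The step I expect to be the main obstacle, and the one that distinguishes this bound from a naive application, is reconciling the prefactor $h_{2}-h_{1}$ required on the left of the claim with the smaller prefactors $h_{2}-\tau(t)$ and $\tau(t)-h_{1}$ produced above. The device is that $\mathbb{Y}>0$ makes each integrand pointwise nonnegative, so each integral is nonnegative; combined with $h_{2}-h_{1}\ge h_{2}-\tau(t)\ge 0$ and $h_{2}-h_{1}\ge \tau(t)-h_{1}\ge 0$, replacing the small factors by $h_{2}-h_{1}$ and negating only decreases the left-hand sides, e.g.
\begin{equation}
-(h_2-h_1)\int_{t-h_2}^{t-\tau(t)} \dot{\mathbf{z}}^{\text{T}} \mathbb{Y} \dot{\mathbf{z}}\,ds \ \leq\ -(h_2-\tau(t))\int_{t-h_2}^{t-\tau(t)} \dot{\mathbf{z}}^{\text{T}} \mathbb{Y} \dot{\mathbf{z}}\,ds,
\end{equation}
and analogously for the second piece. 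Chaining these with the two Jensen bounds and summing then upper-bounds $-(h_{2}-h_{1})\int_{t-h_2}^{t-h_1}\dot{\mathbf{z}}^{\text{T}}\mathbb{Y}\dot{\mathbf{z}}\,ds$ by the sum of the two quadratic terms.

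Finally I would expand both quadratic terms in the stacked variable $\text{col}\big(\mathbf{z}(t-h_1),\mathbf{z}(t-\tau(t)),\mathbf{z}(t-h_2)\big)$ and collect coefficients. The structural point is that $\mathbf{z}(t-h_1)$ and $\mathbf{z}(t-h_2)$ never occur together in a single product (they originate in different sub-intervals), so the $(1,3)$ coupling block vanishes, while $\mathbf{z}(t-\tau(t))$ appears in both terms and therefore accumulates the weight $-2\mathbb{Y}$ in the center; the immediate cross terms contribute $\mathbb{Y}$ in the $(1,2)$ and $(2,3)$ positions, reproducing the coefficient matrix displayed in the statement. A minor case to dispatch is the degenerate configuration $\tau(t)\in\{h_{1},h_{2}\}$, where one sub-interval collapses: there the corresponding quadratic term drops out and the estimate reduces to a single application of Lemma \ref{lem_S}, so the inequality persists.
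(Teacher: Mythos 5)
Your proof is correct, and it is essentially the standard derivation of this inequality. Note that the paper itself does not prove Lemma \ref{lem_R} at all --- it imports it from \cite{peng2008improved} --- and the argument in that reference is precisely yours: split the integral at $t-\tau(t)$, apply the Jensen-type bound of Lemma \ref{lem_S} on each sub-interval, and enlarge the prefactors $h_2-\tau(t)$ and $\tau(t)-h_1$ to $h_2-h_1$ using the nonnegativity of the integrand. One detail in your final step deserves correction, however: expanding your two quadratic terms in the stacked vector $\text{col}\left(\mathbf{z}(t-h_1),\,\mathbf{z}(t-\tau(t)),\,\mathbf{z}(t-h_2)\right)$ does \emph{not} reproduce the matrix displayed in the statement; it yields $-\mathbb{Y}$ in the $(3,3)$ block, not $+\mathbb{Y}$. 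The $-\mathbb{Y}$ form is in fact what appears in \cite{peng2008improved}, and the $+\mathbb{Y}$ in the paper's statement is evidently a typographical slip. This works in your favor rather than against you: since $-\mathbf{z}^{\text{T}}(t-h_2)\,\mathbb{Y}\,\mathbf{z}(t-h_2) \leq \mathbf{z}^{\text{T}}(t-h_2)\,\mathbb{Y}\,\mathbf{z}(t-h_2)$ for $\mathbb{Y}>0$, the bound you actually derive is stronger than, and hence implies, the inequality as stated. You should simply say so explicitly --- i.e., present the $-\mathbb{Y}$ version as the output of the expansion and then relax the $(3,3)$ block --- rather than claim the expansion matches the displayed matrix, which as written is a (small) inaccuracy in an otherwise complete argument.
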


\subsection{Problem Formulation}
\label{sec:mas_gen}
Let us consider a multi-agent system comprising of $N$ agents with identical dynamics. The dynamics of the $i$th agent is considered to be as follows:
\begin{equation}
\dot{\mathbf{x}}_i(t) = \mathbf{A} \mathbf{x}_i(t) + \mathbf{B} \mathbf{u}_i(t), \qquad i=1,\dots, N \label{single_agent_dyn}
\end{equation}
where $\mathbf{A} \in \R^{n \times n}$, $\mathbf{B} \in \R^{n \times m}$ are the system matrices with $\mathbf{x}_i \in \R^n$ as the state and $\mathbf{u}_i \in \R^m$ as the input of the $i$th agent. 

\begin{assum} \label{assum_stab}
$(\mathbf{A}, \mathbf{B})$ is stabilizable.
\end{assum}

In this paper, we design a fully distributed control protocol for the multi-agent system in \eqref{single_agent_dyn} to achieve consensus under non-uniform time-varying communication delays among the agents. 

\section{Distributed Control Protocol}
\label{sec:consensus_protocol}
\begin{defn} 
\label{consensus_defn}
The group of agents are said to reach consensus under any control protocol $\mathbf{u}_i$ if for any set of initial conditions $\{\mathbf{x}_i(0)\}$ there exists $\mathbf{x}^c\in \R^n$ such that $\lim_{t \to \infty} \mathbf{x}_i(t)=\mathbf{x}^c$ for all $i=1, \dots, N$.
\end{defn}

With assumption \eqref{assum_stab}, let each of the agents $i=1, \dots, N$ have identical feedback controller $K \in \R^{m \times n}$ such that $\mathbf{A}-\mathbf{B} \mathbf{K}$ is Hurwitz. We consider following distributed control protocol based on the relative states between neighboring agents:
\begin{equation}
\mathbf{u}_i(t) = \mathbf{K} \left[ \sum_{j \in \mathcal{N}_i} a_{ij} (\mathbf{x}_i (t-\tau_{ij}(t)) - \mathbf{x}_j(t-\tau_{ij}(t)) \right] \label{control_protoc_delay_nonuniform}
\end{equation}
where $\tau_{ij}(t)$ is the time-delay in the communication between agents $i$ and $j$.

We assume the time delay $\tau_{ij}(t)$ and the delay derivatives $\dot{\tau}_{ij}(t)$  to be bounded for all $i,j =1, \dots, N$ and for all $t>0$ such that $\tau_{ij}(t) \leq \bar{\tau}_{ij}$, and $\dot{\tau}_{ij}(t) \leq \mu_{ij}<1$. Moreover, we consider that the communication delay is dependent on the direction of the information flow, i.e, $\tau_{ij} \neq \tau_{ji}$. Therefore, a unique time-delay is associated with each edge in the graph. Let $r\leq N(N-1)$ be the total number of edges in the graph and $\tau_k$, $k=1, \dots r$ be the delay associated with the $k$th edge. Let $\mathbf{L}_k \in \R^N$ be the Laplacian matrix of the subgraph associated with the time delay $\tau_{k}$ such that $\mathbf{L} = \sum^r_{k=1} \mathbf{L}_k$. For the ease of analysis, we provide the rule to compute $k$ in Procedure \ref{single_indexing_delay} (Appendix).

With the distributed control protocol in \eqref{control_protoc_delay_nonuniform}, the closed-loop dynamics of agent $i$, for all $i=1,\dots, N$  can be written as
\begin{equation}
\dot{\mathbf{x}}_i(t) = \mathbf{A} \mathbf{x}_i(t) + \mathbf{B} \mathbf{K} \left[ \sum_{j \in \mathcal{N}_i} a_{ij} (\mathbf{x}_i (t-\tau_{ij}(t)) - \mathbf{x}_j(t-\tau_{ij}(t)) \right] \label{closed_loop_withdelay_agent_i}
\end{equation}

Denote $\mathbf{x} = \text{col}(\mathbf{x}_1, \mathbf{x}_2, \dots, \mathbf{x}_N) \in \R^{Nn}$ as the global state vector. Now the global state dynamics can be written as
\begin{equation}
\dot{\mathbf{x}}(t)  =  \left(\mathbf{I}_N \otimes \mathbf{A}\right) \mathbf{x}(t) + \sum^r_{k=1} \left(\mathbf{L}_k  \otimes  \mathbf{B}  \mathbf{K} \right) \mathbf{x}(t-\tau_k) \label{global_state_dyn_wdelay}
\end{equation}

\section{Main Results}
This section provides the delay-dependent stability conditions for the consensus in multi-agent system in \ref{global_state_dyn_wdelay}. In order to derive the stability conditions, the consensus problem is transformed to an equivalent stability problem using the following Lemma.

\label{sec:main_results}
\begin{lem}
\label{lem_z_x}
Suppose the graph topology of $N$ agents satisfy Assumption \ref{connected_root_node}. Then, the multi-agent system in \eqref{global_state_dyn_wdelay} reaches consensus asymptotically (as per the definition \ref{consensus_defn}), if the following equivalent system is asymptotically stable
\begin{equation}
\dot{\mathbf{z}}(t) = \left(\mathbf{I}_{N-1} \otimes \mathbf{A} \right) \mathbf{z}(t) - \sum_{k=1}^{r} \left(\bar{\mathbf{L}}_k \otimes \mathbf{B} \mathbf{K}  \right) \mathbf{z}\left(t-\tau_{k}(t)\right)  \label{z_dynamics}
\end{equation}
where $\mathbf{z} = \text{col} (\mathbf{z}_1, \mathbf{z}_2, \dots, \mathbf{z}_{N-1}) \in \R^{(N-1) n}$ is the global consensus error such that $\mathbf{z}_i = \mathbf{x}_1-\mathbf{x}_{i+1} \in \R^n$, ($i=1, \cdots, N-1)$. Moreover, $\bar{\mathbf{L}}_k = \mathbf{U} \mathbf{L}_k \mathbf{W} \in \R^{(N-1) \times (N-1)}$ with $\mathbf{U} = \begin{bmatrix}
\mathbf{1}_{N-1} & -\mathbf{I}_{N-1} \end{bmatrix} \in \R^{(N-1) \times N}$, and $\mathbf{W} = \begin{bmatrix}
\mathbf{0}^{\text{T}}_{N-1} \\ -\mathbf{I}_{N-1} 
\end{bmatrix} \in \R^{N \times (N-1)}
$.
\end{lem}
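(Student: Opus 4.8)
The plan is to introduce an explicit linear change of coordinates that sends the global state $\mathbf{x}$ to the stacked disagreement vector $\mathbf{z}$, and to show that in these coordinates the delayed closed-loop dynamics \eqref{global_state_dyn_wdelay} collapses onto the reduced system \eqref{z_dynamics} with the consensus direction $\mathbf{1}_N$ projected out. The basic observation is that $\mathbf{z}_i = \mathbf{x}_1 - \mathbf{x}_{i+1}$ can be written compactly as $\mathbf{z} = (\mathbf{U}\otimes\mathbf{I}_n)\mathbf{x}$, since the $i$-th block row of $\mathbf{U}$ selects exactly $\mathbf{x}_1-\mathbf{x}_{i+1}$, and that $\mathbf{z}=\mathbf{0}$ holds if and only if $\mathbf{x}_1=\mathbf{x}_2=\cdots=\mathbf{x}_N$. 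Thus if I can produce a self-contained dynamics for $\mathbf{z}$ whose asymptotic stability forces $\mathbf{z}(t)\to\mathbf{0}$, all pairwise differences vanish and the agents asymptotically agree, which is the content of Definition \ref{consensus_defn}.

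First I would differentiate $\mathbf{z}=(\mathbf{U}\otimes\mathbf{I}_n)\mathbf{x}$ and substitute \eqref{global_state_dyn_wdelay}. Applying the mixed-product rule for Kronecker products, $(\mathbf{U}\otimes\mathbf{I}_n)(\mathbf{I}_N\otimes\mathbf{A})=\mathbf{U}\otimes\mathbf{A}$ and $(\mathbf{U}\otimes\mathbf{I}_n)(\mathbf{L}_k\otimes\mathbf{B}\mathbf{K})=(\mathbf{U}\mathbf{L}_k)\otimes\mathbf{B}\mathbf{K}$, so $\dot{\mathbf{z}}$ is obtained as an expression still written in terms of the full state $\mathbf{x}$ and its delayed values. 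The remaining task is to re-express $\mathbf{x}$ through $\mathbf{z}$, and the device for this is the recovery identity $\mathbf{x}=(\mathbf{W}\otimes\mathbf{I}_n)\mathbf{z}+\mathbf{1}_N\otimes\mathbf{x}_1$, which I would verify directly: the $(i{+}1)$-th block of $(\mathbf{W}\otimes\mathbf{I}_n)\mathbf{z}$ returns $\mathbf{x}_{i+1}-\mathbf{x}_1$, and adding back $\mathbf{1}_N\otimes\mathbf{x}_1$ reconstructs $\mathbf{x}$.

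Substituting this recovery identity and invoking three elementary facts then collapses the expression: $\mathbf{U}\mathbf{W}=\mathbf{I}_{N-1}$ turns the instantaneous term into $(\mathbf{I}_{N-1}\otimes\mathbf{A})\mathbf{z}$; $\mathbf{U}\mathbf{1}_N=\mathbf{0}_{N-1}$ annihilates the common mode in that same term; and the Laplacian zero-row-sum property $\mathbf{L}_k\mathbf{1}_N=\mathbf{0}$, which holds for each subgraph Laplacian, annihilates the common mode in every delayed term. What survives is the delayed coupling through $\mathbf{U}\mathbf{L}_k\mathbf{W}=\bar{\mathbf{L}}_k$, so the reduced dynamics closes in $\mathbf{z}$ alone and reproduces \eqref{z_dynamics}.

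The hard part is not the bookkeeping but precisely this decoupling: a priori $\dot{\mathbf{z}}$ could depend on the uncontrolled common mode $\mathbf{x}_1(t)$ and its delayed samples, in which case \eqref{z_dynamics} would fail to be an autonomous, equivalent system. Showing that all such contributions cancel identically---via $\mathbf{U}\mathbf{1}_N=\mathbf{0}$ for the instantaneous part and $\mathbf{L}_k\mathbf{1}_N=\mathbf{0}$ for each delayed part---is the crux, and it is exactly what the complementary pair $(\mathbf{U},\mathbf{W})$ with $\mathbf{U}\mathbf{W}=\mathbf{I}_{N-1}$ is engineered to achieve. Once the decoupling is established, asymptotic stability of \eqref{z_dynamics} gives $\mathbf{z}(t)\to\mathbf{0}$; the final point I would need to address is that vanishing disagreement indeed delivers a common constant limit $\mathbf{x}^c$ rather than mere asymptotic agreement, which closes the equivalence with Definition \ref{consensus_defn}.
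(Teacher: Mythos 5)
Your proposal is correct and follows the same route as the paper --- the same pair $(\mathbf{U},\mathbf{W})$, the same Kronecker-product algebra --- but on the one point you single out as the crux it is actually \emph{more} careful than the paper's own proof. The paper's appendix deduces $\mathbf{x}(t) = (\mathbf{W}\otimes\mathbf{I}_n)\,\mathbf{z}(t)$ from $\mathbf{U}\mathbf{W}=\mathbf{I}_{N-1}$ and substitutes it directly; read literally, that identity is false, since $\mathbf{U}$ has the nontrivial kernel spanned by $\mathbf{1}_N$ and in fact $(\mathbf{W}\mathbf{U}\otimes\mathbf{I}_n)\,\mathbf{x} = \mathbf{x} - \mathbf{1}_N\otimes\mathbf{x}_1$, so the correct recovery formula is precisely your $\mathbf{x} = (\mathbf{W}\otimes\mathbf{I}_n)\,\mathbf{z} + \mathbf{1}_N\otimes\mathbf{x}_1$. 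The paper's substitution nonetheless lands on the right reduced dynamics \eqref{z_dynamics} only because the omitted common-mode term is annihilated --- by $\mathbf{U}\mathbf{1}_N = \mathbf{0}_{N-1}$ in the drift term and by the zero-row-sum property $\mathbf{L}_k\mathbf{1}_N = \mathbf{0}_N$ of each subgraph Laplacian in the delayed terms --- which is exactly the cancellation you carry out explicitly. In other words, your write-up supplies the justification that the paper leaves implicit (or, strictly speaking, states incorrectly), and the three identities you invoke are all valid. One caveat applies to both arguments equally: asymptotic stability of \eqref{z_dynamics} yields $\|\mathbf{x}_i(t)-\mathbf{x}_1(t)\|\to 0$, i.e., asymptotic agreement, whereas Definition \ref{consensus_defn} asks for convergence to a fixed $\mathbf{x}^c\in\R^n$; since $\mathbf{A}$ need not be Hurwitz, the common trajectory itself need not converge. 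You flag this final step without resolving it, and the paper does not resolve it either (it simply notes $\mathbf{x}_i \to \mathbf{x}_1$ and stops), so on this point your proposal is no weaker than the published proof.
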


\begin{proof}
Please refer to the Appendix.
\end{proof}

In order to derive the stability conditions of the consensus error system in \eqref{z_dynamics}, we introduce Theorem \ref{thm_main}. For notational simplicity, we denote identity matrix $\mathbf{I}_{N-1}$ with $\mathbf{I}$. 

\begin{thm}
\label{thm_main}
Suppose the graph topology of $N$ agents satisfy Assumption \ref{connected_root_node}. Then, the multi-agent system in \eqref{global_state_dyn_wdelay} reaches consensus asymptotically (as per the definition \ref{consensus_defn}) for any $\tau_{k}(t)$ satisfying $\tau_{k}(t) \leq \bar{\tau}_{k}$, and $\dot{\tau}_{k}(t) \leq \mu_{k}<1$ ($k=1, \dots, r$) if there exists $\mathbf{P}=\mathbf{P}^{\text{T}}>0 \in \R^{n \times n}$, $\mathbf{Q}_k=\mathbf{Q}^{\text{T}}_k>0 \in \R^{n \times n}$, $\mathbf{R}_k = \mathbf{R}^{\text{T}}_k >0 \in \R^{n \times n}$, $k=1, \dots, r$ and $\mathbf{S}_{kj}=\mathbf{S}^{\text{T}}_{kj}>0 \in \R^{n \times n}$, $k=1, \dots, r-1$, $j=i+1, i+2, \dots, r$ such that the following LMI holds
\begin{equation}
\begin{bmatrix}
\mathbf{\Pi} & \bm{\xi}^{T} \mathbf{\Gamma} \\
* & -\mathbf{\Gamma}
\end{bmatrix}
<\mathbf{0} \label{lmi_delay}
\end{equation}
where
\begin{equation}
\begin{aligned}
\mathbf{\Gamma} &= \mathbf{I} \otimes  \left(\sum_{k=1}^{r} \bar{\tau}_{k}^{2} \mathbf{R}_k +\sum_{k=1}^{r-1} \sum_{j=k+1}^{r}\left(\bar{\tau}_{k}-\bar{\tau}_{j}\right)^{2} \mathbf{S}_{k j} \right)\\
\end{aligned},
\end{equation}
\begin{equation}
\bm{\xi} = \begin{bmatrix}
\left( \mathbf{I} \otimes \mathbf{A} \right) & -\left(\bar{\mathbf{L}}_1 \otimes \mathbf{B} \mathbf{K}  \right) & \mathbf{0} & -\left(\bar{\mathbf{L}}_2 \otimes \mathbf{B} \mathbf{K}  \right) & \mathbf{0} & \cdots & -\left(\bar{\mathbf{L}}_r \otimes \mathbf{B} \mathbf{K}  \right) & \mathbf{0}
\end{bmatrix}, \label{gamma_mat}
\end{equation} and
\begin{equation}
\mathbf{\Pi} = \begin{bmatrix}
\mathbf{\Pi}_{1,1} & \mathbf{\Pi}_{1,2}& \mathbf{0} & \mathbf{\Pi}_{1,4}& \mathbf{0} & \cdots &\mathbf{\Pi}_{1,2r} & \mathbf{0} \\
* & \mathbf{\Pi}_{2,2} &  \mathbf{I} \otimes \mathbf{R}_{1} &  \mathbf{I} \otimes \mathbf{S}_{12} & \mathbf{0} & \cdots &  \mathbf{I} \otimes \mathbf{S}_{1r} & \mathbf{0} \\
* & * & - \mathbf{I} \otimes \mathbf{R}_{1} & \mathbf{0} & \mathbf{0} & \cdots & \mathbf{0} & \mathbf{0}\\
* & * & * & \mathbf{\Pi}_{4,4} &  \mathbf{I} \otimes \mathbf{R}_{2} & \cdots &  \mathbf{I} \otimes \mathbf{S}_{2r} & \mathbf{0}\\
* & * & * & * & - \mathbf{I} \otimes \mathbf{R}_{2} & \cdots & \mathbf{0} & \mathbf{0}\\
\vdots & \vdots & \vdots & \vdots & \vdots & \vdots & \vdots & \vdots \\
* & * & * & * & * & \cdots & \mathbf{\Pi}_{2k, 2k} & \mathbf{I} \otimes \mathbf{R}_{r}\\
* & * & * & * & * & \cdots &* &- \mathbf{I} \otimes \mathbf{R}_{r}
\end{bmatrix}
\label{pi_mat}
\end{equation}
with
\begin{equation}
\begin{aligned}
\mathbf{\Pi}_{11}  =\ & \mathbf{I} \otimes \left[ \mathbf{A}^{\text{T}} \mathbf{P} + \mathbf{P} \mathbf{A} + \sum_{k=1}^{r} \left(\mathbf{Q}_k - \mathbf{R}_k \right) \right], \quad \mathbf{\Pi}_{1, 2k} =  \left(\mathbf{I} \otimes \mathbf{R}_{k} \right) - \left(\bar{\mathbf{L}}_k \otimes \mathbf{P} \mathbf{B} \mathbf{K} \right)\\
 \mathbf{\Pi}_{2 k, 2 k} =\ &  \mathbf{I} \otimes \left(-\left(1-\mu_{k}\right) \mathbf{Q}_k - 2 \mathbf{R}_k - \sum_{j=1}^{k-1} \mathbf{S}_{j k} - \sum_{j=k+1}^{r} \mathbf{S}_{k j} \right).
\end{aligned}
\end{equation}

\end{thm}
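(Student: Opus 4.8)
The plan is to invoke Lemma~\ref{lem_z_x} to replace the consensus requirement by asymptotic stability of the reduced error dynamics \eqref{z_dynamics}, and then to establish that stability with a Lyapunov–Krasovskii functional (LKF) reverse-engineered from the block pattern of $\mathbf\Pi$, $\bm\xi$ and $\mathbf\Gamma$. Reading off the matrices, I would take
$$V(t)=\mathbf z^{\text T}(t)(\mathbf I\otimes\mathbf P)\mathbf z(t)+V_Q+V_R+V_S,$$
where $V_Q=\sum_{k=1}^{r}\int_{t-\tau_k(t)}^{t}\mathbf z^{\text T}(s)(\mathbf I\otimes\mathbf Q_k)\mathbf z(s)\,ds$ supplies the $\mathbf Q_k$ terms, the double integral $V_R=\sum_{k=1}^{r}\bar\tau_k\int_{-\bar\tau_k}^{0}\int_{t+\theta}^{t}\dot{\mathbf z}^{\text T}(s)(\mathbf I\otimes\mathbf R_k)\dot{\mathbf z}(s)\,ds\,d\theta$ supplies the $\bar\tau_k^2\mathbf R_k$ weight together with the $\mathbf R_k$-coupling blocks, and an analogous double integral over the inter-delay windows $[t-\bar\tau_j,t-\bar\tau_k]$ with weights $(\bar\tau_k-\bar\tau_j)$ supplies the $\mathbf S_{kj}$ terms. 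Positivity of $V$ is immediate from $\mathbf P,\mathbf Q_k,\mathbf R_k,\mathbf S_{kj}>0$.

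Next I would differentiate $V$ along \eqref{z_dynamics}. The term $2\mathbf z^{\text T}(\mathbf I\otimes\mathbf P)\dot{\mathbf z}$ produces $\mathbf I\otimes(\mathbf A^{\text T}\mathbf P+\mathbf P\mathbf A)$ in $\mathbf\Pi_{1,1}$ together with the cross blocks $-(\bar{\mathbf L}_k\otimes\mathbf P\mathbf B\mathbf K)$ inside $\mathbf\Pi_{1,2k}$; $\dot V_Q$ yields $\mathbf I\otimes\mathbf Q_k$ at the top and, using $\dot\tau_k\le\mu_k<1$, the $-(1-\mu_k)\mathbf Q_k$ contribution in $\mathbf\Pi_{2k,2k}$. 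Differentiating $V_R$ and $V_S$ gives the quadratic-in-$\dot{\mathbf z}$ terms $\bar\tau_k^2\mathbf R_k$ and $(\bar\tau_k-\bar\tau_j)^2\mathbf S_{kj}$, which I collect into $\mathbf\Gamma$, minus weighted integrals of $\dot{\mathbf z}^{\text T}\mathbf R_k\dot{\mathbf z}$ and $\dot{\mathbf z}^{\text T}\mathbf S_{kj}\dot{\mathbf z}$. Applying Lemma~\ref{lem_R} to the $\mathbf R_k$ integral (with $h_1=0$, $h_2=\bar\tau_k$, $\tau=\tau_k(t)$) delivers exactly the $-\mathbf R_k$, $+\mathbf R_k$, $-2\mathbf R_k$ pattern tying $\mathbf z(t)$, $\mathbf z(t-\tau_k(t))$ and $\mathbf z(t-\bar\tau_k)$, while the integral inequalities of Lemma~\ref{lem_S}/Lemma~\ref{lem_R} over $[t-\bar\tau_j,t-\bar\tau_k]$ produce the $\mathbf S_{kj}$ couplings. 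Stacking everything into $\bm\eta(t)=\text{col}\big(\mathbf z(t),\mathbf z(t-\tau_1(t)),\mathbf z(t-\bar\tau_1),\dots,\mathbf z(t-\tau_r(t)),\mathbf z(t-\bar\tau_r)\big)$ and noting that \eqref{z_dynamics} is precisely $\dot{\mathbf z}(t)=\bm\xi\,\bm\eta(t)$, I obtain $\dot V(t)\le\bm\eta^{\text T}(t)\big(\mathbf\Pi+\bm\xi^{\text T}\mathbf\Gamma\,\bm\xi\big)\bm\eta(t)$.

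The final step is a Schur complement: since $\mathbf\Gamma>0$, the inequality $\mathbf\Pi+\bm\xi^{\text T}\mathbf\Gamma\,\bm\xi<\mathbf 0$ is equivalent to \eqref{lmi_delay}, so feasibility of the LMI forces $\dot V<0$ for $\bm\eta\neq\mathbf 0$; the Lyapunov–Krasovskii theorem then yields asymptotic stability of \eqref{z_dynamics}, and Lemma~\ref{lem_z_x} converts this to asymptotic consensus. I expect the main obstacle to be the bookkeeping of the $\mathbf S_{kj}$ cross terms: choosing the inter-delay windows and the correct instance of the integral inequality so that each coupling block lands in the prescribed $(2k,2j)$ position of $\mathbf\Pi$ with the right sign, while simultaneously matching the diagonal accumulation $-\sum_{j<k}\mathbf S_{jk}-\sum_{j>k}\mathbf S_{kj}$ in $\mathbf\Pi_{2k,2k}$, is where the computation is most error-prone. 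Verifying that the three-term Lemma~\ref{lem_R} bounds assemble consistently across all $r$ delays without double-counting is the crux.
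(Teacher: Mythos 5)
Your proposal follows essentially the same route as the paper: the identical four-term Lyapunov--Krasovskii functional ($\mathbf{P}$, $\mathbf{Q}_k$, $\mathbf{R}_k$, $\mathbf{S}_{kj}$ terms), Lemma~\ref{lem_R} applied with $h_1=0$, $h_2=\bar\tau_k$ for the $\mathbf{R}_k$ integrals, Lemma~\ref{lem_S} for the $\mathbf{S}_{kj}$ integrals, the same augmented vector $\Sigma(t)$ with $\dot{\mathbf{z}}=\bm{\xi}\Sigma$, and the same Schur-complement step followed by Lemma~\ref{lem_z_x}. The bookkeeping point you flagged as the crux is handled in the paper by first bounding the constant-window integral in $\dot V_4$ by the integral over the time-varying window $[t-\tau_k(t),\,t-\tau_j(t)]$ and only then applying Lemma~\ref{lem_S}, which is exactly why the $\mathbf{S}_{kj}$ blocks couple $\mathbf{z}(t-\tau_k(t))$ and $\mathbf{z}(t-\tau_j(t))$ in the $(2k,2j)$ positions rather than $\mathbf{z}(t-\bar\tau_k)$ and $\mathbf{z}(t-\bar\tau_j)$ as your stated window $[t-\bar\tau_j,\,t-\bar\tau_k]$ would give.
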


\begin{proof}
Let us consider the following Lyapunov Krasovskii functional
\begin{equation}
V\left(\mathbf{z}_{t}, t\right)= V_1 \left(\mathbf{z}_{t}, t\right) + V_2 \left(\mathbf{z}_{t}, t\right) + V_3 \left(\mathbf{z}_{t}, t\right) + V_4\left(\mathbf{z}_{t}, t\right)
\end{equation}
where
\begin{equation}
\begin{aligned}
V_1\left(\mathbf{z}_{t}, t\right)=\ & \mathbf{z}^{\text{T}}(t) \left(\mathbf{I} \otimes \mathbf{P} \right) \mathbf{z}(t)\\
V_2\left(\mathbf{z}_{t}, t\right)=\ & \sum_{k=1}^{r}\left[\int_{t-\tau_{k}(t)}^{t} \mathbf{z}^{\text{T}}(s) \left(\mathbf{I} \otimes \mathbf{Q}_k \right) \mathbf{z}(s) d s\right] \\
V_3\left(\mathbf{z}_{t}, t\right)=\ & \sum_{k=1}^{r}\left[\bar{\tau}_{k} \int_{-\bar{\tau}_{k}}^{0} \int_{t+\theta}^{t} \dot{\mathbf{z}}^{\text{T}}(s) \left(\mathbf{I} \otimes \mathbf{R}_k \right) \dot{\mathbf{z}}(s) d s d \theta \right] \\
V_4 \left(\mathbf{z}_{t}, t\right)=\ & \sum_{k=1}^{r-1} \sum_{j=k+1}^{r}\left[\left(\bar{\tau}_{k}-\bar{\tau}_{j}\right) \int_{-\bar{\tau}_{k}}^{-\bar{\tau}_{j}} \int_{t+\theta}^{t} \dot{\mathbf{z}}^{\text{T}}(s) \left(\mathbf{I} \otimes \mathbf{S}_{k j} \right) \dot{\mathbf{z}}(s) d s d \theta \right]
\end{aligned}
\end{equation}

The time-derivative of the components of the Lyapunov-Krasovskii functional along the trajectory of \eqref{z_dynamics} is given by
\begin{equation}
\label{lyap_components}
\begin{aligned}
\dot{V}_1 \left(\mathbf{z}_{t}, t\right)=\ & 2 \mathbf{z}^{\text{T}}(t) \left(\mathbf{I} \otimes \mathbf{P} \right)\left[\left( \mathbf{I} \otimes \mathbf{A} \right) \mathbf{z}(t)-\sum_{k=1}^{r} \left(\bar{\mathbf{L}}_k \otimes \mathbf{B} \mathbf{K}  \right) \mathbf{z}\left(t-\tau_{k}(t)\right)\right]\\
\dot{V}_2 \left(\mathbf{z}_{t}, t\right)=\ &  \sum_{k=1}^{r}\left[\mathbf{z}^{\text{T}}(t) \left(\mathbf{I} \otimes \mathbf{Q}_k \right) \mathbf{z}(t)-\left(1-\dot{\tau}_{k}(t)\right) \mathbf{z}^{\text{T}}\left(t-\tau_{k}(t)\right) \left(\mathbf{I} \otimes \mathbf{Q}_k \right) \mathbf{z}\left(t-\tau_{k}(t)\right)\right] \\
\dot{V}_3 \left(\mathbf{z}_{t}, t\right)=\ & \sum_{k=1}^{r}\left[\bar{\tau}_{k} \int_{t-\bar{\tau}_{k}}^{t} \dot{\mathbf{z}}^{\text{T}}(t) \left(\mathbf{I} \otimes \mathbf{R}_k \right) \dot{\mathbf{z}}(t) d \theta \right]  - \sum_{k=1}^{r}\left[\bar{\tau}_{k} \int_{t-\bar{\tau}_{k}}^{t} \dot{\mathbf{z}}^{\text{T}}(t+\theta) \left(\mathbf{I} \otimes \mathbf{R}_k \right) \dot{\mathbf{z}}(t+\theta) d \theta\right]\\
= \ &  \dot{\mathbf{z}}^{\text{T}}(t) \left[ \sum_{k=1}^{r}\bar{\tau}^2_{k} \left(\mathbf{I} \otimes \mathbf{R}_k \right)  \right]  \dot{\mathbf{z}}(t) - \sum_{k=1}^{r}\left[\bar{\tau}_{k} \int_{t-\bar{\tau}_{k}}^{t} \dot{\mathbf{z}}^{\text{T}}(t+\theta) \left(\mathbf{I} \otimes \mathbf{R}_k \right) \dot{\mathbf{z}}(t+\theta) d \theta\right]\\
\dot{V}_4 \left(\mathbf{z}_{t}, t\right)=\ &  \sum_{k=1}^{r-1} \sum_{j=k+1}^{r}\left[\left(\bar{\tau}_{k}-\bar{\tau}_{j}\right) \int_{t-\bar{\tau}_{k}}^{t-\bar{\tau}_{j}} \dot{\mathbf{z}}^{\text{T}}(t) \left(\mathbf{I} \otimes \mathbf{S}_{k j} \right) \dot{\mathbf{z}}(t) d \theta \right] \\
& -\sum_{k=1}^{r-1} \sum_{j=k+1}^{r}\left[\left(\bar{\tau}_{k}-\bar{\tau}_{j}\right) \int_{t-\bar{\tau}_{k}}^{t-\bar{\tau}_{j}} \dot{\mathbf{z}}^{\text{T}}(t+\theta) \left(\mathbf{I} \otimes \mathbf{S}_{k j} \right) \dot{\mathbf{z}}(t+\theta) d \theta \right]\\
=\ & \dot{\mathbf{z}}^{\text{T}}(t) \sum_{k=1}^{r-1} \sum_{j=k+1}^{r}\left[\left(\bar{\tau}_{k}-\bar{\tau}_{j}\right)^2  \left(\mathbf{I} \otimes \mathbf{S}_{k j} \right) \right] \dot{\mathbf{z}}(t)  -\sum_{k=1}^{r-1} \sum_{j=k+1}^{r}\left[\left(\bar{\tau}_{k}-\bar{\tau}_{j}\right) \int_{t-\bar{\tau}_{k}}^{t-\bar{\tau}_{j}} \dot{\mathbf{z}}^{\text{T}}(t+\theta) \left(\mathbf{I} \otimes \mathbf{S}_{k j} \right) \dot{\mathbf{z}}(t+\theta) d \theta \right]
\end{aligned}
\end{equation}

Combining the derivatives of the components in \eqref{lyap_components}, the derivative of the Lyapunov Krasovskii functional is given by
\begin{equation}
\label{lyap_deriv}
\begin{aligned}
\dot{V}\left(\mathbf{z}_{t}, t\right)=\ & 2 \mathbf{z}^{\text{T}}(t) \left(\mathbf{I} \otimes \mathbf{P} \right)\left[\left( \mathbf{I} \otimes \mathbf{A} \right) \mathbf{z}(t)-\sum_{k=1}^{r} \left(\bar{\mathbf{L}}_k \otimes \mathbf{B} \mathbf{K}  \right) \mathbf{z}\left(t-\tau_{k}(t)\right)\right]\\
& + \sum_{k=1}^{r}\left[\mathbf{z}^{\text{T}}(t) \left(\mathbf{I} \otimes \mathbf{Q}_k \right) \mathbf{z}(t)-\left(1-\dot{\tau}_{k}(t)\right) \mathbf{z}^{\text{T}}\left(t-\tau_{k}(t)\right) \left(\mathbf{I} \otimes \mathbf{Q}_k \right) \mathbf{z}\left(t-\tau_{k}(t)\right)\right] \\
& +\dot{\mathbf{z}}^{\text{T}}(t) \mathbf{\Gamma} \dot{\mathbf{z}}(t)-\sum_{k=1}^{r}\left[\bar{\tau}_{k} \int_{t-\bar{\tau}_{k}}^{t} \dot{\mathbf{z}}^{\text{T}}(t+\theta) \left(\mathbf{I} \otimes \mathbf{R}_k \right) \dot{\mathbf{z}}(t+\theta) d \theta\right] \\
& -\sum_{k=1}^{r-1} \sum_{j=k+1}^{r}\left[\left(\bar{\tau}_{k}-\bar{\tau}_{j}\right) \int_{t-\bar{\tau}_{k}}^{t-\bar{\tau}_{j}} \dot{\mathbf{z}}^{\text{T}}(t+\theta) \left(\mathbf{I} \otimes \mathbf{S}_{k j} \right) \dot{\mathbf{z}}(t+\theta) d \theta \right]
\end{aligned}
\end{equation}.

Since the delays and the delay derivatives are assumed to be bounded such that $\tau_{k}(t) \leq \bar{\tau}_{k}$ for all $t$ and $k$, $\dot{\tau}_{k}(t) \leq \mu_{k}<1$, for all $k$, \eqref{lyap_deriv} can be written as
\begin{equation}
\label{lyap_deriv2}
\begin{aligned}
\dot{V}\left(\mathbf{z}_{t}, t\right) \leq \ & 2 \mathbf{z}^{\text{T}}(t) \left(\mathbf{I} \otimes \mathbf{P} \right)\left[\left( \mathbf{I} \otimes \mathbf{A} \right) \mathbf{z}(t)-\sum_{k=1}^{r} \left(\bar{\mathbf{L}}_k \otimes \mathbf{B} \mathbf{K}  \right) \mathbf{z}\left(t-\tau_{k}(t)\right)\right]\\
& + \sum_{k=1}^{r}\left[\mathbf{z}^{\text{T}}(t) \left(\mathbf{I} \otimes \mathbf{Q}_k \right) \mathbf{z}(t)-\left(1-\mu_{k}\right) \mathbf{z}^{\text{T}}\left(t-\tau_{k}(t)\right) \left(\mathbf{I} \otimes \mathbf{Q}_k \right) \mathbf{z}\left(t-\tau_{k}(t)\right)\right] \\
& +\dot{\mathbf{z}}^{\text{T}}(t) \mathbf{\Gamma} \dot{\mathbf{z}}(t)-\sum_{k=1}^{r}\left[\bar{\tau}_{k} \int_{t-\bar{\tau}_{k}}^{t} \dot{\mathbf{z}}^{\text{T}}(t+\theta) \left(\mathbf{I} \otimes \mathbf{R}_k \right) \dot{\mathbf{z}}(t+\theta) d \theta\right] \\
& -\sum_{k=1}^{r-1} \sum_{j=k+1}^{r}\left[\left(\tau_{k}(t)-\tau_{j}(t)\right) \int_{t-\tau_{k}(t)}^{t-\tau_{j}(t)} \dot{\mathbf{z}}^{\text{T}}(t+\theta) \left(\mathbf{I} \otimes \mathbf{S}_{k j} \right) \dot{\mathbf{z}}(t+\theta) d \theta \right]
\end{aligned}
\end{equation}

Now, using Lemma \ref{lem_S} with $\mathbb{X}$=$\mathbf{I} \otimes \mathbf{S}_{k j}$, one can obtain the following integral inequality
\begin{equation}
\label{eqn_lem_S}
-\left(\tau_{k}(t)-\tau_{j}(t)\right) \int_{t-\tau_{k}(t)}^{t-\tau_{j}(t)} \dot{\mathbf{z}}^{\text{T}}(t+\theta) \left(\mathbf{I} \otimes \mathbf{S}_{k j} \right) \dot{\mathbf{z}}(t+\theta) d \theta \ \leq \ \begin{bmatrix}
\mathbf{z}\left(t-\tau_{j}(t)\right)\\
\mathbf{z}\left(t-\tau_{k}(t)\right) 
\end{bmatrix}^{\text{T}} \begin{bmatrix}
-\left(\mathbf{I} \otimes \mathbf{S}_{k j} \right) & \left(\mathbf{I} \otimes \mathbf{S}_{k j} \right) \\
* & -\left(\mathbf{I} \otimes \mathbf{S}_{k j} \right)
\end{bmatrix} \begin{bmatrix}
\mathbf{z}\left(t-\tau_{j}(t)\right)\\
\mathbf{z}\left(t-\tau_{k}(t)\right) 
\end{bmatrix}
\end{equation}

Further, using Lemma \ref{lem_R} with $\mathbb{Y} =\mathbf{I} \otimes \mathbf{R}_k$, we can write
\begin{equation}
\label{eqn_lem_R}
-\bar{\tau}_{k} \int_{t-\bar{\tau}_{k}}^{t} \dot{\mathbf{z}}^{\text{T}}(t+\theta) \left(\mathbf{I} \otimes \mathbf{R}_k \right) \dot{\mathbf{z}}(t+\theta) d \theta \ \leq \ \begin{bmatrix}
\mathbf{z}(t) \\
\mathbf{z}\left(t-\tau_{k}(t)\right) \\
\mathbf{z}\left(t-\bar{\tau}_{k}\right)
\end{bmatrix}  \begin{bmatrix}
-\left(\mathbf{I} \otimes \mathbf{R}_k \right) & \left(\mathbf{I} \otimes \mathbf{R}_k \right) & \mathbf{0} \\
* & -2 \left(\mathbf{I} \otimes \mathbf{R}_k \right) & \left(\mathbf{I} \otimes \mathbf{R}_k \right) \\
* & * & -\left(\mathbf{I} \otimes \mathbf{R}_k \right)
\end{bmatrix} \begin{bmatrix}
\mathbf{z}(t) \\
\mathbf{z}\left(t-\tau_{k}(t)\right) \\
\mathbf{z}\left(t-\bar{\tau}_{k}\right)
\end{bmatrix} 
\end{equation}

After substituting \eqref{eqn_lem_S} and \eqref{eqn_lem_R} in \eqref{lyap_deriv2} and with some algebraic manipulations, one can express the derivative of the Lyapunov Krasovskii functional as follows
\begin{equation}
\dot{V}\left(\mathbf{z}_{t}, t\right) \leq \Sigma^{T}(t)\left[\mathbf{\Pi}+\bm{\xi}^{T} \mathbf{\Gamma} \bm{\xi}\right] \Sigma(t)
\end{equation}
where $\Sigma(t)= \text{col} \left(\mathbf{z}(t), \ \mathbf{z}\left(t-\tau_{1}(t)\right), \ \mathbf{z} \left(t-\bar{\tau}_{1}\right) , \ \cdots \ , \ \mathbf{z}\left(t-\tau_{r}(t)\right) , \ \mathbf{z}\left(t-\bar{\tau}_{r}\right) \right)$ is the augmented error vector, and $\bm{\xi}$ and $\mathbf{\Pi}$ are as defined in \eqref{gamma_mat} and \eqref{pi_mat}, respectively. Using Lyapunov Krasovskii stability theorem, we can conclude that the system \eqref{z_dynamics} achieves asymptotic stability for $\tau_k \leq \bar{\tau}_k$, for all $k=1, \dots, r$ if the inequality $\left[\mathbf{\Pi}+\bm{\xi}^{T} \mathbf{\Gamma} \bm{\xi}\right]<\mathbf{0}$ holds. Further, using the Schur compliment, one can obtain the LMI in \eqref{lmi_delay}. Moreover, using Lemma \ref{lem_z_x}, the multi-agent system in \eqref{global_state_dyn_wdelay} achieves consensus asymptotically. This concludes the proof.
\end{proof}

\section{Simulation Results}
\label{sec:results}

To demonstrate the preceeding analysis, we consider a multi-agent system with following system matrices:
\begin{equation}
\mathbf{A} = \begin{bmatrix}
-2 & 2\\
-1 & 1
\end{bmatrix}, \qquad \mathbf{B} = \begin{bmatrix}
1\\
0
\end{bmatrix}.
\label{system_dyn_simulation}
\end{equation}

The choice of $\mathbf{A}$ and $\mathbf{B}$ satisfies Assumption \ref{assum_stab}. Let us now choose a stabilizing feedback gain, $\mathbf{K} = \begin{bmatrix}
-2 & -0.5
\end{bmatrix}$ such that $\mathbf{A}-\mathbf{B} \mathbf{K}$ is Hurwitz. We consider a network of 3 agents with the following graph Laplacian matrix,
\begin{equation}
\mathbf{L} = \begin{bmatrix}
0 & 0 & 0\\
-1 & 2 & -1\\
0 & -1 & 1
\end{bmatrix}.
\label{laplacian_simulation}
\end{equation} 

For simulation, the maximum bounds on delay derivatives are taken to be: $\mu_1 = 0.7$, $\mu_2=0.8$ and $\mu_3 = 0.9$. Moreover, upon solving the LMI in \eqref{lmi_delay}, we obtain maximum delay bounds to be $\bar{\tau}_1 = 0.29, \quad \bar{\tau}_2 = 0.18, \quad \bar{\tau}_3 = 0.18$ for consensus among the three agents. Figure \ref{fig:state_trajectory_delay} shows the states of agents with delays of $\tau_1 = 0.25$ s, $\tau_2 = 0.16$ s and $\tau_3 = 0.16$ s. Clearly, all the agents achieve stable consensus in the presence of nonuniform delays within the delay bounds. 

\begin{figure}[thpb]
\centering
      \includegraphics[width = 4.5in]{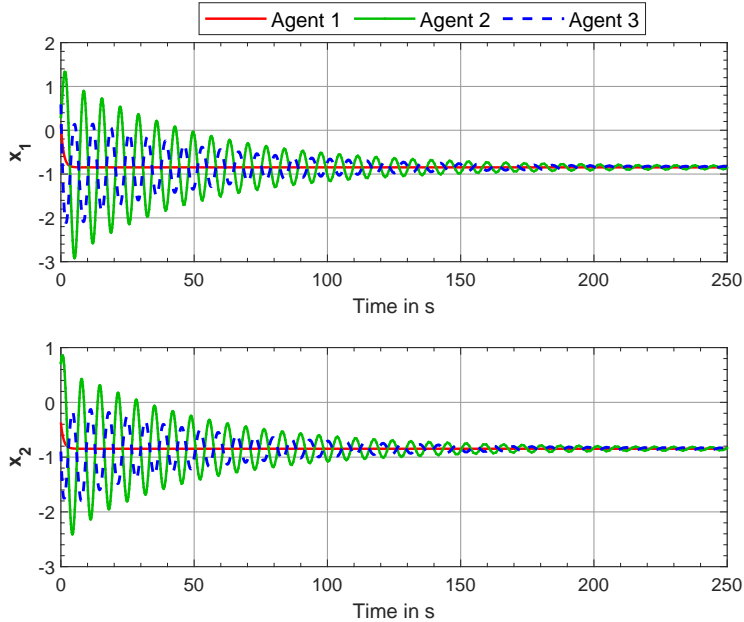}
      \caption{State trajectories of agents with $\tau_1 = 0.25$, $\tau_2 = 0.16$ and $\tau_3 = 0.16$}
	  \label{fig:state_trajectory_delay}
 \end{figure} 
 
 \section{Conclusion}
 \label{sec:conclusion}
 In this paper, we have studied the consensus condition for high-order linear multi-agent system with nonuniform, time-varying communication delay among the agents. First, a fully distributed control protocol was designed for the multi-agent system with delay and subsequently, the problem of state consensus among the agents was transformed to an equivalent problem of stability of the consensus error. Then, using Lyapunov Krasovskii approach, we derive LMI-based approach to characterize the delay margin for multi-agent system with  non-uniform time-varying delays. Numerical results for agents with linear high-order dynamics demonstrate the accuracy of the proposed approach. 
 
 \section*{Appendix}
 \subsection*{Single indexing for communication delays}
 In order to compute the single index $k$ for time-delay $\tau_{ij}$ with double indices, following procedure is used.
 
 \begin{algorithm}
\caption{Computation of $k$} 
\label{single_indexing_delay}
  \begin{algorithmic}[1]
   \STATE \textbf{Initialize:} $k=0$
    \FOR{$i=\{1,2, \ldots, N\}$}
      \FOR{$ j \in \mathcal{N}_{i}$}
       \STATE $k = k+1$
        \STATE $\tau_{k}=\tau_{i j}$
      \ENDFOR
    \ENDFOR
  \end{algorithmic}
\end{algorithm}

\subsection*{Proof to Lemma \ref{lem_z_x}}
\begin{proof}
Since  $\mathbf{z}_i = \mathbf{x}_1-\mathbf{x}_{i+1}$, ($i=1, \cdots, N-1)$. $\mathbf{z} = \text{col} (\mathbf{z}_1, \mathbf{z}_2, \dots, \mathbf{z}_N)$, and $\mathbf{U} = \begin{bmatrix}
\mathbf{1}_{N-1} & -\mathbf{I}_{N-1} \end{bmatrix}$, one can write
\begin{equation}
\mathbf{z}(t) = \left(\mathbf{U} \otimes \mathbf{I}_{n}\right) \mathbf{x}(t). \label{z_rel_x}
\end{equation}

Also with $\mathbf{W} = \begin{bmatrix}
\mathbf{0}^{\text{T}}_{N-1} \\ -\mathbf{I}_{N-1} 
\end{bmatrix}$, we have $\mathbf{U} \mathbf{W} = \mathbf{I}_{N-1}$; therefore, the global state vector of the multi-agent system in \eqref{global_state_dyn_wdelay} can be written as $\mathbf{x}(t) = \left( \mathbf{W} \otimes \mathbf{I}_n \right) \mathbf{z}(t)$. Differentiating \eqref{z_rel_x} with respect to time, we obtain
\begin{equation}
\begin{aligned}
\dot{\mathbf{z}}(t)= \ & \left(\mathbf{U} \otimes \mathbf{I}_{n}\right) \left[\left(\mathbf{I}_{N} \otimes \mathbf{A}\right) \mathbf{x}(t)-\sum_{k=1}^{r}\left(\mathbf{L}_{k} \otimes \mathbf{B} \mathbf{K}\right) \mathbf{x}\left(t-\tau_{k}\right)\right]\\
=\ & \left(\mathbf{U} \otimes \mathbf{I}_{n}\right) \left[\left(\mathbf{I}_{N} \otimes \mathbf{A}\right) \left(\mathbf{W} \otimes \mathbf{I}_{n}\right) \mathbf{z}(t)-\sum^r_{k=1}\left(\mathbf{L}_{k} \otimes \mathbf{B} \mathbf{K}\right) \left(\mathbf{W} \otimes \mathbf{I}_{n}\right) \mathbf{z}(t-\tau_k) \right]\\
=\ & \left(\mathbf{U} \mathbf{W} \otimes \mathbf{A} \right) \mathbf{z}(t) - \sum^r_{k=1}\left[ \left(\mathbf{U} \mathbf{L}_{k} \mathbf{W} \right) \otimes \mathbf{B} \mathbf{K}\right] \mathbf{z}(t-\tau_k)\\
=\ & \left(\mathbf{I}_{N-1} \otimes \mathbf{A} \right) \mathbf{z}(t) - \sum_{k=1}^{r} \left(\bar{\mathbf{L}}_k \otimes \mathbf{B} \mathbf{K}  \right) \mathbf{z}\left(t-\tau_{k}(t)\right) 
\end{aligned}
\end{equation}

Note that, as $\mathbf{z}(t) \to 0$, $\mathbf{x}_i \to \mathbf{x}_1$, $i=2, \dots, N$. This completes the proof.
\end{proof}

\section*{Acknowledgments}
This work was supported by the Office of Naval Research (grant number N00014-18-1-2215).

\bibliography{Bhusal_Subbarao_Delay_2020}
\end{document}